\documentclass[11pt,a4paper]{article}

\usepackage[utf8]{inputenc}
\usepackage[T1]{fontenc}
\usepackage{amsmath,amssymb,amsthm}
\usepackage{mathtools}
\usepackage{physics}
\usepackage{graphicx}
\usepackage{booktabs}
\usepackage{array}
\usepackage{multirow}
\usepackage{subcaption}
\usepackage{algorithm}
\usepackage{algpseudocode}
\usepackage{hyperref}
\usepackage{xcolor}
\usepackage{enumerate}
\usepackage{enumitem}
\usepackage{cite}
\usepackage{geometry}
\usepackage{fancyhdr}
\usepackage{thmtools}
\usepackage{thm-restate}

\usepackage{tikz}
\usepackage{pgfplots}
\pgfplotsset{compat=1.18}

\newtheorem{theorem}{Theorem}[section]
\newtheorem{lemma}[theorem]{Lemma}
\newtheorem{proposition}[theorem]{Proposition}

\theoremstyle{definition}
\newtheorem{definition}[theorem]{Definition}

\hypersetup{
    colorlinks=true,
    linkcolor=blue!70!black,
    citecolor=blue!70!black,
    urlcolor=blue!70!black,
    pdfborder={0 0 0}
}

\title{\textbf{Closed-Form Analytical Solution for Effective Resistance in Finite 2D Anisotropic Resistor Grids via Jacobi Theta Functions}}

\author{
Ruichao Liu\\
\textit{East China University of Science and Technology}\\
\texttt{2680319963@qq.com}
}

\begin{document}

\maketitle

\begin{abstract}
\noindent
Computing the effective resistance between nodes in finite discrete resistor grids is a classical problem in circuit analysis with applications in VLSI power delivery network analysis, graph theory, and network science. Recent advances, particularly the infinity mirror technique proposed by Bairamkulov et al., provide an elegant physical interpretation for boundary conditions in finite grids. Building upon this foundation, this paper presents a closed-form analytical expression that avoids numerical truncation or polynomial fitting.

Our theoretical development proceeds in two steps. First, we derive an exact analytical primitive for the singular integral term $R_2$ within the integral operator $\Omega_\alpha$, which previously relied on numerical approximation due to its complex radical structure. Second, we transform the doubly infinite mirror series into a compact expression using the Jacobi theta function $\vartheta_1$ via the Jacobi triple product identity. This transformation exploits the Gaussian convergence of theta functions, achieving machine precision with only a few terms, compared to thousands required for algebraic convergence in direct truncation.

However, under high anisotropy ($K = R_y/R_x \gg 1$ or $K \ll 1$), the pure analytical approximation exhibits a distinct ``cross-shaped'' residual error along the Cartesian axes. To address this without sacrificing efficiency, we introduce a hybrid engineering remediation: a dynamic numerical cache that performs localized grid integration (LGI) within an elliptical threshold, combining the $O(1)$ asymptotic speed of the closed form with exact near-field accuracy. Numerical experiments on grids up to $101 \times 101$ nodes demonstrate that our combined approach achieves mean relative errors below 0.6\% compared to SPICE simulations, with maximum errors under 0.9\% across tested configurations including anisotropy ratios $k = 0.01$ to $k = 100$, and eliminates the axis-localized error artifacts.
\end{abstract}

\noindent\textbf{Keywords:} Effective resistance, Resistor network, Jacobi theta function, Closed-form solution, Anisotropic grid, VLSI power delivery network, Numerical caching

\vspace{1em}

\section{Introduction}

\subsection{Background and Motivation}

The computation of effective resistance between two nodes in a resistor network is a fundamental problem in several scientific and engineering domains. In VLSI circuit design, accurate effective resistance calculation is crucial for power delivery network (PDN) analysis, where IR drop estimation influences timing closure, signal integrity, and chip reliability \cite{pandey2025anair}. In graph theory, effective resistance serves as a distance metric with applications in spectral clustering, network robustness analysis, and random walk studies \cite{klein1993resistance}. Despite its significance, obtaining exact analytical solutions for finite grids with general boundary conditions remains a challenging problem.

For infinite isotropic grids, the classical solution dates back to van der Pol and Wiener \cite{vanderpol1929}, who expressed the effective resistance as an integral involving Bessel functions. Subsequent refinements by Cserti \cite{cserti2000} provided the well-known formula:
\begin{equation}
R_{\infty}(p,q) = \frac{r_0}{2\pi} \int_0^{2\pi} \frac{1 - e^{-|p|\lambda}\cos(q\theta)}{\sinh\lambda} \, d\theta,
\end{equation}
where $\lambda = \operatorname{arccosh}(2 - \cos\theta)$. However, this formulation requires numerical quadrature and does not directly extend to finite domains with boundary conditions.

The finite grid problem introduces additional mathematical complexity due to boundary reflections. Recent work by Bairamkulov and Friedman \cite{bairamkulov2020} introduced the \textit{infinity mirror technique}, which interprets finite grid boundaries through a physical picture: each boundary acts as a mirror, generating an infinite cascade of image sources. This transforms the finite grid problem into an infinite grid with a structured source distribution. While this approach provides physical intuition, the resulting doubly infinite series converges slowly and requires numerical truncation, introducing errors that grow near boundaries.

\subsection{Related Work}

\paragraph{Infinite Grid Solutions.} 
The foundational work on infinite resistor networks traces back to Kirchhoff's circuit laws and the introduction of discrete Green's functions. Cserti \cite{cserti2000} unified various approaches and provided explicit formulas for isotropic grids. Extensions to anisotropic cases were explored by several authors \cite{kose2012, atkinson2012}, who noted that the anisotropy parameter introduces algebraic complexity that precludes simple closed forms.

\paragraph{Finite Grid Approximations.}
Early approaches to finite grids relied on truncating infinite series or solving large linear systems. Köse and Friedman \cite{kose2012ir} proposed polynomial approximations with empirical fitting coefficients, achieving reasonable accuracy (typically $< 2\%$ error) for moderate anisotropy ratios. However, these methods require different polynomial coefficients for different parameter ranges.

\paragraph{The Infinity Mirror Technique.}
Bairamkulov and Friedman \cite{bairamkulov2020} provided a comprehensive treatment of finite grid effective resistance. Their approach expresses the resistance as:
\begin{equation}
R_{\text{finite}} = \sum_{m,n \in \mathbb{Z}} c_{mn} \Omega_\alpha(p_{mn}, q_{mn}),
\end{equation}
where $\Omega_\alpha$ is the infinite grid kernel and the sum extends over all mirror images. The authors derived explicit expressions for the coefficients $c_{mn}$ and established error bounds for truncation.

\paragraph{Machine Learning Approaches.}
Recent years have seen machine learning methods applied to circuit analysis. Pandey et al. \cite{pandey2025anair} proposed AnaIR, a Green's function-inspired neural network for IR drop prediction, demonstrating that incorporating physical structure into neural architectures improves both accuracy and generalization. Such approaches typically require training data and may not provide the same interpretability as analytical methods.

\subsection{Our Contributions}

This paper presents an analytical framework for computing effective resistance in finite anisotropic grids. Our contributions are twofold:

\begin{enumerate}[label=\textbf{(\roman*)}]
\item \textbf{Analytical Primitive for the Singular Integral $R_2$.} We derive a closed-form expression for the singular integral term:
\begin{equation}
R_2(\alpha) = \frac{\sqrt{\alpha}}{4\pi} \ln\left(\frac{16}{\pi^2(\alpha+1)}\right),
\end{equation}
which previously required numerical approximation or polynomial fitting.

\item \textbf{Theta Function Closed-Form for Finite Grids.} We transform the doubly infinite mirror series into a compact expression involving the Jacobi theta function $\vartheta_1$:
\begin{equation}
R_{\text{total}} = \frac{r_0\sqrt{\alpha}}{2\pi} \ln\left|\frac{\mathcal{N}}{\mathcal{D}}\right|,
\end{equation}
where $\mathcal{N}$ and $\mathcal{D}$ are products of theta functions evaluated at specific arguments. This transformation leverages the Gaussian convergence of theta functions, reducing computational complexity from $O(N^2)$ to $O(1)$.
\end{enumerate}

\paragraph{Engineering Remediation.}
In addition, we identify and resolve a residual error phenomenon: under high anisotropy ($K \gg 1$ or $K \ll 1$), the analytical approximation exhibits a distinct ``cross-shaped'' error pattern along the Cartesian axes. To eliminate this without sacrificing efficiency, we introduce a hybrid framework that combines the asymptotic closed-form with a dynamic numerical cache. The cache performs exact numerical integration for displacement pairs within an elliptical boundary, ensuring $O(1)$ amortized lookup while removing axis-localized artifacts.

Numerical experiments show that this combined analytical-engineering approach achieves:
\begin{itemize}
\item Mean relative error $< 0.6\%$ compared to SPICE simulations
\item Maximum error $< 7\%$ across tested configurations
\item Elimination of axis-localized error spikes observed in pure analytical form
\item Robustness across anisotropy ratios $k \in [0.01, 100]$
\end{itemize}

\subsection{Paper Organization}

The remainder of this paper is organized as follows. Section~2 introduces the mathematical preliminaries and problem formulation. Section~3 presents the analytical derivation of $R_2$. Section~4 develops the theta function closed-form solution for finite grids. Section~5 introduces the engineering remediation strategy to address axis-localized errors. Section~6 provides experimental validation. Section~7 discusses applications and extensions. Section~8 concludes with directions for future work.

\section{Mathematical Preliminaries}

\subsection{Problem Formulation}

Consider a finite two-dimensional resistor grid with $N_x \times N_y$ nodes. The grid has anisotropic resistances: horizontal resistors $r_h$ and vertical resistors $r_v$. We define the reference resistance $r_0 = r_v$ and the anisotropy ratio $k = r_v / r_h = \alpha^{-1}$, where $\alpha = r_h / r_v$.

\begin{definition}[Effective Resistance]
Given two nodes at lattice coordinates $(x_0, y_0)$ and $(x_1, y_1)$, the effective resistance $R_{\text{eff}}$ is the resistance measured when a unit current is injected at one node and extracted at the other, with all other nodes floating.
\end{definition}

For an infinite grid, the effective resistance satisfies the discrete Laplace equation:
\begin{equation}
\frac{1}{r_h}(V_{i+1,j} - 2V_{i,j} + V_{i-1,j}) + \frac{1}{r_v}(V_{i,j+1} - 2V_{i,j} + V_{i,j-1}) = -I_{i,j},
\end{equation}
where $V_{i,j}$ is the nodal voltage and $I_{i,j}$ is the injected current.

\subsection{The Integral Operator $\Omega_\alpha$}

The fundamental building block for effective resistance computation is the integral operator $\Omega_\alpha(p,q)$, defined through two-dimensional Fourier analysis:

\begin{definition}[Integral Operator]
The integral operator $\Omega_\alpha(p,q)$ is defined as:
\begin{equation}
\label{eq:omega_def}
\Omega_\alpha(p,q) = \frac{\alpha}{2\pi} \int_0^\pi \frac{1 - e^{-|p|\lambda}\cos(q\theta)}{\sinh\lambda} \, d\theta,
\end{equation}
where the dispersion relation $\lambda(\theta;\alpha)$ satisfies:
\begin{equation}
\cosh\lambda = 1 + \alpha - \alpha\cos\theta.
\end{equation}
\end{definition}

\begin{proposition}
The operator $\Omega_\alpha$ can be decomposed as:
\begin{equation}
\Omega_\alpha(p,q) = R_1(p,q;\alpha) + R_2(\alpha),
\end{equation}
where:
\begin{align}
R_1(p,q;\alpha) &= \frac{\sqrt{\alpha}}{2\pi} \int_0^\pi \frac{1 - e^{-|p|\sqrt{\alpha}\theta}\cos(q\theta)}{\theta} \, d\theta, \label{eq:R1}\\
R_2(\alpha) &= \frac{\alpha}{2\pi} \int_0^\pi \left(\frac{1}{\sinh\lambda} - \frac{1}{\sqrt{\alpha}\cdot\theta}\right) \, d\theta. \label{eq:R2}
\end{align}
\end{proposition}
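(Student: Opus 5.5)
The plan is to verify the proposition by direct algebra on the integrands. First I check that $R_1$ and $R_2$ are each well defined. Then I subtract $R_1+R_2$ from $\Omega_\alpha$ in \eqref{eq:omega_def} and try to show that the remainder vanishes. As explained below, the attempt fails at one identifiable point: I expect the exact identity to be false as worded, and the plan ends by saying which reading can be proved.

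\emph{Step 1 (each piece converges).} From $\cosh\lambda = 1+\alpha-\alpha\cos\theta = 1+\tfrac{\alpha}{2}\theta^2+O(\theta^4)$ we get $\lambda = \sqrt{\alpha}\,\theta + O(\theta^3)$ and $\sinh\lambda = \sqrt{\alpha}\,\theta + O(\theta^3)$. Hence $\alpha/\sinh\lambda - \sqrt{\alpha}/\theta = O(\theta)$, and the integrand of \eqref{eq:R2} is bounded near $\theta = 0$. The integrand $(1-e^{-|p|\sqrt{\alpha}\theta}\cos q\theta)/\theta$ of \eqref{eq:R1} tends to $|p|\sqrt{\alpha}$ as $\theta\to 0$. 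Both integrals are therefore proper, as is $\Omega_\alpha$ itself, whose integrand is $O(1)$ at the origin.

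\emph{Step 2 (recombination).} Splitting every integrand into its parts is legitimate only after regrouping, so I add the integrands pointwise. The singular terms $\sqrt{\alpha}/(2\pi\theta)$ from $R_1$ and $-\alpha/(2\pi\sqrt{\alpha}\,\theta)$ from $R_2$ cancel exactly. This leaves
\begin{equation*}
\Omega_\alpha - R_1 - R_2 = \frac{1}{2\pi}\int_0^\pi \cos(q\theta)\left(\frac{\sqrt{\alpha}\,e^{-|p|\sqrt{\alpha}\theta}}{\theta} - \frac{\alpha\, e^{-|p|\lambda}}{\sinh\lambda}\right) d\theta =: E(p,q;\alpha).
\end{equation*}
The proposition is equivalent to $E\equiv 0$.

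\emph{Step 3 (main obstacle: showing $E=0$).} This is the step I expect to fail. The two terms inside $E$ agree only to leading order, via $\lambda\approx\sqrt{\alpha}\,\theta$, and nothing forces their integrals to coincide. My first test is $p=q=0$. There $\Omega_\alpha(0,0)=0$ and $R_1(0,0)=0$, while $E(0,0;\alpha) = -R_2(\alpha)$. The claimed closed form gives $R_2(\alpha)=\frac{\sqrt{\alpha}}{4\pi}\ln\frac{16}{\pi^2(\alpha+1)}$, which is nonzero for generic $\alpha$, for example $\alpha=1$. So the identity fails exactly unless $R_2$ is absorbed into a constant convention, such as measuring $R_{\mathrm{eff}}$ through differences in which constants cancel.

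The defensible version is therefore one of two statements. The first is that $\Omega_\alpha = R_1+R_2$ holds up to the explicit remainder $E$. The second is that $\Omega_\alpha = R_1 + R_2$ holds in the limit of large $|p|$ or $|q|$. In that limit $E\to 0$, by dominated convergence for the factor $e^{-|p|(\cdot)}$ and by Riemann--Lebesgue for $\cos q\theta$, after noting that the bracket in $E$ lies in $L^1(0,\pi)$ because its $1/\theta$ parts cancel.

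I would present Steps 1--2 as the proof of the rearrangement. I would state openly that exact equality requires $E=0$, which the $(0,0)$ check refutes.
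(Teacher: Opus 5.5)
Your conclusion is correct. The statement is false as an exact identity. Your analysis is also more careful than the paper's own proof, and that proof is where the real gap lies.

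\textbf{Where the paper's proof breaks.} The paper argues only by ``adding and subtracting $1/(\sqrt{\alpha}\,\theta)$'', plus the remark $\sinh\lambda\sim\sqrt{\alpha}\,\theta$. That manipulation only processes the constant $1$ in the numerator. It splits the integrand of $\Omega_\alpha$ into three pieces: the integrand of $R_2$, the term $\sqrt{\alpha}/(2\pi\theta)$ of $R_1$, and the untouched term $-\alpha e^{-|p|\lambda}\cos(q\theta)/(2\pi\sinh\lambda)$. To turn that last piece into the oscillatory part of $R_1$, one must replace $\lambda$ and $\sinh\lambda$ by $\sqrt{\alpha}\,\theta$. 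That is only their leading-order behaviour at $\theta=0$, and the error of the replacement is exactly your $E(p,q;\alpha)$. The paper's asymptotic remark shows only that $R_2$ converges, not that $E$ vanishes.

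\textbf{Your counterexample.} The test at $p=q=0$ is a valid refutation. There $\Omega_\alpha(0,0)=R_1(0,0;\alpha)=0$. The closed form for $R_2$, which does check out, gives $R_2(1)=\frac{1}{4\pi}\ln\frac{8}{\pi^2}\approx-0.0167$. For $\alpha>0$, $R_2$ vanishes only at $\alpha=16/\pi^2-1$.

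\textbf{A second counterexample.} In case the origin is dismissed as degenerate, you can add a genuine lattice displacement. Take $\alpha=1$ and $(p,q)=(1,0)$.
\begin{itemize}
\item On the left side, $(1-e^{-\lambda})/\sinh\lambda=2/(e^{\lambda}+1)$. The classical nearest-neighbour resistance $r_0/2$ (Cserti's normalization is twice the paper's) then gives $\Omega_1(1,0)=\frac14$ exactly, i.e.\ $\int_0^\pi d\theta/(e^{\lambda}+1)=\pi/4$.
\item On the right side, $R_1(1,0;1)=\frac{1}{2\pi}\int_0^\pi\frac{1-e^{-\theta}}{\theta}\,d\theta\approx0.2758$, so $R_1+R_2\approx0.2591$.
\end{itemize}

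\textbf{Your asymptotic reading.} This is the right way to salvage the statement. For $\alpha=1$ and $q=0$ it reproduces the classical $\frac{1}{2\pi}\bigl(\ln|p|+\gamma+\frac32\ln 2\bigr)$ in the paper's normalization. It also matches how the paper later uses the formula: as a baseline that needs the near-field correction $\Delta\Omega_{\text{correction}}$.

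\textbf{One detail to tighten.} For $|p|\to\infty$, dominated convergence needs a bound uniform in $p$, not just integrability at fixed $p$. You can get it as follows.
\begin{itemize}
\item The chain $\cosh\lambda=1+\alpha(1-\cos\theta)\le1+\frac{\alpha\theta^2}{2}\le\cosh(\sqrt{\alpha}\,\theta)$ gives $0\le\sqrt{\alpha}\,\theta-\lambda\le C\theta^3$.
\item Combine this with $\lambda\ge c\,\theta$ on $[0,\pi]$.
\item Together these bound the bracket by a constant plus $C'|p|\theta^2e^{-c|p|\theta}\le C''\theta$, uniformly in $p$.
\end{itemize}
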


\begin{proof}
The decomposition follows from adding and subtracting the term $\frac{1}{\sqrt{\alpha}\theta}$ in the integrand of Eq.~\eqref{eq:omega_def}. Near $\theta = 0$, the asymptotic expansion:
\begin{equation}
\sinh\lambda \sim \sqrt{\alpha}\theta\left(1 + O(\theta^2)\right)
\end{equation}
ensures that $R_2$ is well-defined despite the apparent singularity.
\end{proof}

The term $R_1(p,q;\alpha)$ is a logarithmic integral that admits closed-form evaluation. The term $R_2(\alpha)$ involves the radical structure $\sinh\lambda = 2\sqrt{\alpha}\sin(\theta/2)\sqrt{1 + \alpha\sin^2(\theta/2)}$.

\subsection{Jacobi Theta Functions}

We briefly review the properties of Jacobi theta functions that will be used in our derivation.

\begin{definition}[Jacobi Theta Functions]
The four Jacobi theta functions $\vartheta_j(z,q)$ for $j = 1,2,3,4$ are defined by the following series:
\begin{align}
\vartheta_1(z,q) &= 2\sum_{n=0}^{\infty} (-1)^n q^{(n+1/2)^2} \sin((2n+1)z), \\
\vartheta_2(z,q) &= 2\sum_{n=0}^{\infty} q^{(n+1/2)^2} \cos((2n+1)z), \\
\vartheta_3(z,q) &= 1 + 2\sum_{n=1}^{\infty} q^{n^2} \cos(2nz), \\
\vartheta_4(z,q) &= 1 + 2\sum_{n=1}^{\infty} (-1)^n q^{n^2} \cos(2nz),
\end{align}
where $q = e^{i\pi\tau}$ is the nome with $\tau$ in the upper half-plane.
\end{definition}

\begin{theorem}[Jacobi Triple Product Identity]
For all $z \in \mathbb{C}$ and $|q| < 1$:
\begin{equation}
\vartheta_4(z,q) = \prod_{n=1}^{\infty} (1 - q^{2n})(1 - 2q^{2n-1}\cos(2z) + q^{4n-2}).
\end{equation}
\end{theorem}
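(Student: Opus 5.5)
Our plan is to derive the stated product formula from the classical Jacobi triple product in its Laurent form,
\begin{equation}
\sum_{n\in\mathbb{Z}} q^{n^2} w^{n} \;=\; \prod_{m=1}^{\infty} (1-q^{2m})(1+q^{2m-1}w)(1+q^{2m-1}w^{-1}), \qquad w\neq 0,\ |q|<1,
\end{equation}
and then to specialise it to $\vartheta_4$. If the Laurent identity may not simply be quoted, we would first prove it with the standard functional-equation argument. Set
\begin{equation}
P(w)=\prod_{m\ge1}(1+q^{2m-1}w)(1+q^{2m-1}w^{-1}).
\end{equation}
Because $\sum_m |q|^{2m-1}$ converges, $P$ converges absolutely and locally uniformly on $\mathbb{C}\setminus\{0\}$. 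Hence $P$ is holomorphic there and has a Laurent expansion $P(w)=\sum_n a_n(q)w^n$. Replacing $w$ by $q^2w$ shifts the indices of both factor families. A direct check gives $P(q^2w)=(qw)^{-1}P(w)$, which yields the recursion $a_n=q^{2n-1}a_{n-1}$ and therefore $a_n=q^{n^2}a_0$.

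The main obstacle is identifying the constant $a_0(q)=\prod_{m}(1-q^{2m})^{-1}$. We would first try the route of evaluating at special points. Put $w=i$ and $w=-1$, or use Gauss's trick of comparing $w=i$ with the $q\to q^4$ version. This gives a relation $a_0(q)=a_0(q^4)\cdot(\text{explicit product})$, which iterates to the limit $a_0(0)=1$.

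A fallback is the combinatorial route: read $a_0$ as the generating function counting pairs of distinct odd-part partitions with balanced sizes. Partition bijections (Durfee-square style) then identify this count with $\prod(1-q^{2m})^{-1}$. Either route gives the Laurent identity above.

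Next we specialise by setting $w=-e^{2iz}$. The left side becomes
\begin{equation}
\sum_{n\in\mathbb{Z}}(-1)^n q^{n^2}e^{2inz} = 1+2\sum_{n\ge1}(-1)^n q^{n^2}\cos(2nz),
\end{equation}
where the $\pm n$ terms are paired. This is exactly $\vartheta_4(z,q)$ as defined in the paper. On the right, each pair of factors combines as
\begin{equation}
(1-q^{2m-1}e^{2iz})(1-q^{2m-1}e^{-2iz}) = 1-2q^{2m-1}\cos(2z)+q^{4m-2},
\end{equation}
which gives the claimed product.

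The identity is first obtained for real $z$. It extends to all $z\in\mathbb{C}$ by analytic continuation, since both sides are entire in $z$ for fixed $|q|<1$. The series side is entire by the Gaussian decay of $q^{n^2}$, and the product side by the uniform convergence noted above.
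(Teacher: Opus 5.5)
The paper does not prove this statement. It lists the identity among the background facts on theta functions and quotes it as the classical result, so there is no argument of the paper's to compare against. Your proposal supplies the standard textbook proof, and it is correct. The functional equation $P(q^2w)=(qw)^{-1}P(w)$ checks out and gives $a_n=q^{n^2}a_0$. The substitution $w=-e^{2iz}$ turns the series into exactly the paper's $\vartheta_4$ series, and turns each paired factor into $1-2q^{2m-1}\cos(2z)+q^{4m-2}$. Quoting the identity is adequate for a result this classical. Your version makes the argument self-contained, and it shows the identity holds for complex $q$ with $|q|<1$ without branch issues, since only integer powers $q^{n^2}$ occur.

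When you write it out, tighten three points.

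First, Gauss's normalization compares $w=i$ at nome $q$ with $w=-1$ at nome $q^4$. Both series sides equal $S(q)=\sum_k(-1)^kq^{4k^2}$. Set $c(q)=a_0(q)\prod_m(1-q^{2m})$. The product identity $\prod_m(1-q^{2m})(1+q^{4m-2})=\prod_m(1-q^{8m})(1-q^{8m-4})^2$ then gives $(c(q)-c(q^4))S(q)=0$, and hence $c(q)=c(q^4)$ because $S\not\equiv 0$. The iteration also needs $a_0$ continuous at $q=0$ with $a_0(0)=1$; this follows from $a_0=\frac{1}{2\pi}\int_0^{2\pi}P(e^{i\phi})\,d\phi$.

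Second, in the combinatorial fallback the constant term counts pairs of partitions into distinct odd parts with the same \emph{number} of parts, not the same size. Euler's expansion $\prod_m(1+q^{2m-1}w)=\sum_k q^{k^2}w^k/\prod_{j=1}^k(1-q^{2j})$ turns the constant term into $\sum_k q^{2k^2}/\prod_{j=1}^k(1-q^{2j})^2$. The Durfee-square identity evaluates this as $\prod_m(1-q^{2m})^{-1}$.

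Third, the analytic continuation in $z$ is unnecessary. The Laurent identity holds for every $w\neq0$, and $-e^{2iz}\neq0$ for all $z\in\mathbb{C}$, so the substitution gives the complex case directly.
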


A key property for our application is the modular transformation:

\begin{theorem}[Modular Transformation]
Let $\tau' = -1/\tau$. Then:
\begin{equation}
\vartheta_1(z,\tau) = -i\sqrt{\frac{i}{\tau}} e^{-\pi i z^2/\tau} \vartheta_1\left(\frac{z}{\tau}, \tau'\right).
\end{equation}
\end{theorem}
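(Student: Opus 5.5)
We plan to prove the modular transformation by Poisson summation applied to the Gaussian series that defines $\vartheta_1$. One normalization issue must be fixed first. The factor $e^{-\pi i z^2/\tau}$ and the argument $z/\tau$ in the statement are correct when the angular variable is $\pi z$, i.e.\ for $\vartheta_1(\pi z,q)$ in the series convention of the Definition, with $q=e^{i\pi\tau}$. We will therefore prove the identity for $f(z,\tau):=\vartheta_1(\pi z,e^{i\pi\tau})$. The square root $\sqrt{i/\tau}=(-i\tau)^{-1/2}$ is the principal branch; this branch is holomorphic on the upper half-plane because $-i\tau$ has positive real part there.

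\textbf{Step 1.} Rewrite the series as a sum over the half-integer lattice. Write $\sin((2n+1)\pi z)$ as a difference of exponentials and re-index the negative-frequency terms by $n\mapsto -n-1$. This gives
\[
f(z,\tau)=-i\sum_{n\in\mathbb{Z}}(-1)^n\exp\!\bigl(i\pi\tau(n+\tfrac12)^2+2\pi i(n+\tfrac12)z\bigr).
\]
Writing $(-1)^n=e^{i\pi n}$ absorbs the sign into a shift $z\mapsto z+\tfrac12$ of the linear term, up to a constant phase $e^{-i\pi/2}$. The sum then has the form $\sum_{n}g(n+\tfrac12)$ with $g(x)=\exp(i\pi\tau x^2+2\pi i x w)$, where $w=z+\tfrac12$. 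For $\operatorname{Im}\tau>0$ the function $g$ is a Schwartz function, so Poisson summation applies.

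\textbf{Step 2.} Apply Poisson summation on the shifted lattice, $\sum_n g(n+\tfrac12)=\sum_m \hat g(m)e^{i\pi m}$. Then compute $\hat g$ by completing the square. The result is
\[
\hat g(m)=(-i\tau)^{-1/2}\exp\!\bigl(-i\pi(m-w)^2/\tau\bigr).
\]
The constant comes from the Gaussian integral $\int e^{i\pi\tau x^2}\,dx=(-i\tau)^{-1/2}$. We first verify this for $\tau=it$ with $t>0$, where it is the real Gaussian integral, and then extend it to the whole upper half-plane by analytic continuation, since both sides are holomorphic in $\tau$. Expanding $(m-w)^2=m^2-2mw+w^2$ pulls out the prefactor $e^{-i\pi w^2/\tau}$. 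What remains is a series in $e^{-i\pi m^2/\tau}$ with linear phase $e^{2\pi i m w/\tau}$.

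\textbf{Step 3.} Identify the remaining series as a theta function with nome $e^{i\pi\tau'}$, $\tau'=-1/\tau$, and argument $z/\tau$. This bookkeeping step is where we expect the main obstacle. The shifts $w=z+\tfrac12$ and the half-integer lattice make the integer sum over $m$ come out with alternating signs. These must be recombined into a half-integer sum with the $(-1)^n$ factor of $\vartheta_1$. Doing so produces cross terms such as $e^{-i\pi/(4\tau)}$ and $e^{\pm i\pi z/\tau}$, and all of them must cancel against the expansion of $e^{-i\pi w^2/\tau}$, leaving exactly $e^{-i\pi z^2/\tau}$ together with the overall constant $-i$.

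To control the signs in Step 3, we would first carry out the same computation for $\vartheta_3$, where no shifts occur, obtaining $\vartheta_3(z|\tau)=(-i\tau)^{-1/2}e^{-i\pi z^2/\tau}\vartheta_3(z/\tau|-1/\tau)$. We would then transfer this to $\vartheta_1$ using the quasi-periodic relation $\vartheta_1(z)=-i\,e^{i\pi z+i\pi\tau/4}\vartheta_3(z+\tfrac12+\tfrac\tau2)$. Under $z\mapsto z/\tau$, $\tau\mapsto-1/\tau$, the half-periods $\tfrac12$ and $\tfrac\tau2$ are interchanged, which maps $\vartheta_3$ data to $\vartheta_1$ data. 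Collecting the exponential prefactors then yields the constant $-i\sqrt{i/\tau}$.

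Finally, we would check the constant directly by differentiating at $z=0$. This gives $\vartheta_1'(0|\tau)=-i(-i\tau)^{-1/2}\tau^{-1}\vartheta_1'(0|-1/\tau)$, which is equivalent to the $\eta^3$ transformation law and so confirms the result.
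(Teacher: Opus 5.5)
\textbf{The stated identity is false as printed, so no proof of it can succeed.} The paper quotes this theorem without proof, so there is no competing route to compare. The identity fails even after your (correct) renormalization $f(z,\tau)=\vartheta_1(\pi z,e^{i\pi\tau})$, and your argument breaks exactly at the step you flagged.

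Your Steps~1 and~2 are right. In Step~3 you assert that the bookkeeping leaves the constant $-i$, but it leaves $+i$. Reindex $m=k+1$, so that $(-1)^m=-(-1)^k$ and $m-w=k+\tfrac12-z$. No cross terms arise, and you get
\begin{align*}
f(z,\tau) &= -(-i\tau)^{-1/2}\sum_{m\in\mathbb{Z}}(-1)^m e^{-i\pi(m-w)^2/\tau}\\
&= (-i\tau)^{-1/2}\,e^{-\pi i z^2/\tau}\sum_{k\in\mathbb{Z}}(-1)^k e^{i\pi\tau'(k+\frac12)^2+2\pi i(k+\frac12)z/\tau}\\
&= i\sqrt{i/\tau}\;e^{-\pi i z^2/\tau}\,f(z/\tau,\tau').
\end{align*}
The last step is your Step~1 identity read backwards: the sum equals $i\,f(z/\tau,\tau')$.

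Your $\vartheta_3$ route agrees. Three factors of $-i$ appear: one from the $\vartheta_1$--$\vartheta_3$ relation at $\tau$, one from inverting it at $\tau'$ after the quasi-period shift, and the residual phase $e^{-i\pi/2}$ from the cross term $\tau/2$ in $w^2$ with $w=z+\tfrac12+\tfrac\tau2$. Their product is $(-i)^3=+i$. The printed constant $-i$ is correct only with argument $z\tau'=-z/\tau$, which by oddness of $\vartheta_1$ is the same as $+i$ with $z/\tau$; the statement conflates the two.

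Your closing consistency check refutes the claimed constant rather than confirming it. Since $f'(0,\tau)=2\pi\eta(\tau)^3$, the $\eta^3$ law gives $f'(0,-1/\tau)=(-i\tau)^{3/2}f'(0,\tau)$. Substituting this into your derivative identity makes its right-hand side $-i\,\tau^{-1}(-i\tau)\,f'(0,\tau)=-f'(0,\tau)$, not $f'(0,\tau)$.

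A concrete check: at the fixed point $\tau=i$ the statement reads $f(z,i)=-i\,e^{-\pi z^2}f(-iz,i)$. Differentiating at $z=0$ gives $f'(0,i)=-f'(0,i)$. But $f'(0,i)=2\pi\sum_{n\ge0}(-1)^n(2n+1)e^{-\pi(n+1/2)^2}>0$, since this is an alternating series with rapidly decreasing terms.

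Carried through honestly, your Poisson-summation argument proves the corrected identity with constant $+i\sqrt{i/\tau}$. In the paper's own angular normalization this reads $\vartheta_1(z,\tau)=i\sqrt{i/\tau}\,e^{-iz^2/(\pi\tau)}\,\vartheta_1(z/\tau,-1/\tau)$. The sign happens to be invisible inside the $\ln|\cdot|$ where the paper uses the result, but as a theorem the printed form is false.
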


This transformation ensures fast convergence even when $q \to 1$ (corresponding to extreme grid aspect ratios), by exchanging the $x$ and $y$ directions.

\section{Analytical Derivation of the Singular Integral $R_2$}

This section presents the closed-form solution for $R_2(\alpha)$, which previously required numerical approximation.

\subsection{Algebraic Simplification of the Integrand}

We begin with the definition of $R_2$ from Eq.~\eqref{eq:R2}. The radical expression $\sinh\lambda$ can be simplified using trigonometric identities.

\begin{lemma}[Radical Simplification]
\label{lem:sinh_simplify}
The hyperbolic sine term can be expressed as:
\begin{equation}
\sinh\lambda = 2\sqrt{\alpha}\sin\frac{\theta}{2} \sqrt{1 + \alpha\sin^2\frac{\theta}{2}}.
\end{equation}
\end{lemma}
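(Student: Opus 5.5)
The plan is to compute $\sinh^2\lambda$ from the dispersion relation $\cosh\lambda = 1+\alpha-\alpha\cos\theta$ and then take the positive square root. First, the half-angle identity $1-\cos\theta = 2\sin^2(\theta/2)$ turns the dispersion relation into
\begin{equation*}
\cosh\lambda = 1 + 2\alpha\sin^2\tfrac{\theta}{2}.
\end{equation*}
Write $s=\sin^2(\theta/2)$ for brevity. Next, use $\sinh^2\lambda=\cosh^2\lambda-1=(\cosh\lambda-1)(\cosh\lambda+1)$. This gives
\begin{equation*}
\sinh^2\lambda = (2\alpha s)(2+2\alpha s) = 4\alpha s\,(1+\alpha s).
\end{equation*}
Taking square roots, $\sinh\lambda=\pm 2\sqrt{\alpha}\,|\sin(\theta/2)|\sqrt{1+\alpha s}$.

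The only delicate step is fixing the sign and branch, and it is routine. Since $\alpha=r_h/r_v>0$, we have $\cosh\lambda\ge 1$, so a real $\lambda$ exists. The paper's convention in the definition of $\Omega_\alpha$ is $\lambda\ge 0$: the factor $e^{-|p|\lambda}$ must decay, and the asymptotics $\sinh\lambda\sim\sqrt{\alpha}\theta$ used for the decomposition of $\Omega_\alpha$ require $\sinh\lambda\ge0$. On the integration range $\theta\in[0,\pi]$ we also have $\theta/2\in[0,\pi/2]$, so $\sin(\theta/2)\ge0$ and the absolute value can be dropped. The factor $1+\alpha\sin^2(\theta/2)\ge1$ is positive, so its square root is real and positive. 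The positive branch therefore gives exactly the stated formula.

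As a consistency check, expanding at small $\theta$ gives $2\sqrt{\alpha}\cdot\tfrac{\theta}{2}\,(1+O(\theta^2))=\sqrt{\alpha}\,\theta\,(1+O(\theta^2))$. This matches the asymptotic expansion invoked in the proof of the decomposition $\Omega_\alpha=R_1+R_2$.
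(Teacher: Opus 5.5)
Your proof is correct and follows essentially the same route as the paper: compute $\sinh^2\lambda=\cosh^2\lambda-1$ from the dispersion relation, use the half-angle identity to obtain $4\alpha\sin^2(\theta/2)\bigl(1+\alpha\sin^2(\theta/2)\bigr)$, and take the positive root. The differences are cosmetic. You apply the half-angle identity first and factor $\cosh^2\lambda-1=(\cosh\lambda-1)(\cosh\lambda+1)$, whereas the paper expands the square first. You also justify the sign choice (via $\lambda\ge 0$ and $\sin(\theta/2)\ge 0$ on $[0,\pi]$) more carefully than the paper, which simply asserts $\sinh\lambda>0$.
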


\begin{proof}
From the dispersion relation $\cosh\lambda = 1 + \alpha - \alpha\cos\theta$, we have:
\begin{align}
\sinh^2\lambda &= \cosh^2\lambda - 1 \\
&= (1 + \alpha - \alpha\cos\theta)^2 - 1 \\
&= \alpha^2(1 - \cos\theta)^2 + 2\alpha(1 - \cos\theta).
\end{align}
Using the half-angle identity $1 - \cos\theta = 2\sin^2(\theta/2)$:
\begin{equation}
\sinh^2\lambda = 4\alpha\sin^2\frac{\theta}{2}\left(1 + \alpha\sin^2\frac{\theta}{2}\right).
\end{equation}
Taking the positive square root (since $\sinh\lambda > 0$ for $\theta \in (0,\pi)$) yields the result.
\end{proof}

\subsection{Variable Substitution}

With Lemma~\ref{lem:sinh_simplify}, we perform the substitution $u = \theta/2$:

\begin{proposition}
\label{prop:R2_transformed}
After the substitution $u = \theta/2$, the integral becomes:
\begin{equation}
\label{eq:R2_transformed}
R_2(\alpha) = \frac{\sqrt{\alpha}}{2\pi} \int_0^{\pi/2} \left(\frac{1}{\sin u\sqrt{1 + \alpha\sin^2 u}} - \frac{1}{u}\right) \, du.
\end{equation}
\end{proposition}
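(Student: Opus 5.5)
The plan is to substitute Lemma~\ref{lem:sinh_simplify} into the definition \eqref{eq:R2} and then change variables $\theta = 2u$. The one point needing care is that the two pieces of the integrand in \eqref{eq:R2}, $1/\sinh\lambda$ and $1/(\sqrt{\alpha}\,\theta)$, are each non-integrable at $\theta=0$. I will therefore never split the integral. Instead I will transform the combined integrand as a single function on $(0,\pi]$, or equivalently work on $[\varepsilon,\pi]$ and let $\varepsilon\to 0^+$ at the end.

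First, by Lemma~\ref{lem:sinh_simplify}, for $\theta\in(0,\pi]$ the integrand of \eqref{eq:R2} equals
\begin{equation*}
\frac{1}{2\sqrt{\alpha}\,\sin(\theta/2)\sqrt{1+\alpha\sin^2(\theta/2)}} - \frac{1}{\sqrt{\alpha}\,\theta}.
\end{equation*}
The positive root is legitimate because $\sin(\theta/2)>0$ there. Next I substitute $\theta = 2u$, so $d\theta = 2\,du$ and $\theta\in[0,\pi]$ corresponds to $u\in[0,\pi/2]$. The factor $2$ from $d\theta$ cancels the $2$ in front of $\sin u$ in the first term. It also cancels the $2$ in $\theta = 2u$ in the second term. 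The transformed integrand is therefore
\begin{equation*}
\frac{1}{\sqrt{\alpha}}\left(\frac{1}{\sin u\sqrt{1+\alpha\sin^2 u}} - \frac{1}{u}\right).
\end{equation*}
Multiplying by the prefactor $\alpha/(2\pi)$ gives $\alpha/(2\pi\sqrt{\alpha}) = \sqrt{\alpha}/(2\pi)$, which is exactly \eqref{eq:R2_transformed}.

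The main (mild) obstacle is justifying the change of variables for an integral whose two pieces separately diverge. I handle it by showing that the combined integrand in $u$ extends continuously to $u=0$. Using $\sin u = u\,(1 + O(u^2))$ and $\sqrt{1+\alpha\sin^2u} = 1 + O(u^2)$, the first term is $\frac{1}{u}\bigl(1+O(u^2)\bigr)$, so the difference is $O(u)$ and tends to $0$. This matches the asymptotic $\sinh\lambda\sim\sqrt{\alpha}\,\theta\,(1+O(\theta^2))$ used in the preceding Proposition.

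With a continuous integrand on the compact interval $[0,\pi/2]$, the integral is an ordinary Riemann integral. Equivalently, I apply the substitution on $[\varepsilon,\pi]\mapsto[\varepsilon/2,\pi/2]$, where everything is smooth, and pass to the limit $\varepsilon\to0^+$ on both sides. This completes the argument.
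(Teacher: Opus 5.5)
Your proof is correct and takes the same route as the paper, which inserts the Radical Simplification lemma into the definition \eqref{eq:R2} and substitutes $\theta = 2u$ without writing out the details. Your extra care goes beyond the paper's sketch: you keep the two divergent pieces together and check that the combined integrand is $O(u)$ at $u=0$, which supplies the justification for the change of variables that the paper leaves implicit.
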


\subsection{Analytical Primitive}

The following theorem provides an analytical primitive for the first term in the integrand.

\begin{theorem}[Analytical Primitive]
\label{thm:primitive}
The function:
\begin{equation}
F(u) = \ln\left(\frac{\sqrt{1 + \alpha\sin^2 u} - \cos u}{\sin u}\right)
\end{equation}
is a primitive of $\left(\sin u\sqrt{1 + \alpha\sin^2 u}\right)^{-1}$.
\end{theorem}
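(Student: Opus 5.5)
The plan is to verify the claim by direct differentiation, since the statement only asks us to confirm that $F'(u) = \bigl(\sin u\sqrt{1+\alpha\sin^2 u}\bigr)^{-1}$ on the interval of interest. First, to keep the algebra compact, write $S(u) = \sqrt{1+\alpha\sin^2 u}$. Then split
\begin{equation*}
F(u) = \ln\bigl(S - \cos u\bigr) - \ln(\sin u).
\end{equation*}
This splitting is legitimate on $(0,\pi/2]$, and in fact on $(0,\pi)$, for $\alpha>0$. There $\sin u>0$, and $S > 1 \ge \cos u$, so both logarithms have positive arguments. The same check shows that $F$ is real and smooth on the range of integration in Proposition~\ref{prop:R2_transformed}.

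Next, differentiate term by term. Using $S' = \alpha\sin u\cos u/S$, we get
\begin{equation*}
\frac{d}{du}(S-\cos u) = \frac{\sin u\,(\alpha\cos u + S)}{S},
\end{equation*}
while the second logarithm contributes $-\cot u$. Put both terms over the common denominator $S\sin u\,(S-\cos u)$. The numerator is then
\begin{equation*}
\sin^2 u\,(\alpha\cos u + S) - S\cos u\,(S-\cos u).
\end{equation*}

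The one step that needs care is simplifying this numerator, and it hinges on the identity $S^2 = 1+\alpha\sin^2 u$ from Lemma~\ref{lem:sinh_simplify}'s radical structure. Expanding gives
\begin{equation*}
\alpha\sin^2u\cos u + S\sin^2 u + S\cos^2 u - (1+\alpha\sin^2 u)\cos u .
\end{equation*}
The two $\alpha\sin^2 u\cos u$ terms cancel, and $\sin^2u+\cos^2u=1$, so the numerator collapses to $S-\cos u$. This exactly cancels the factor $S-\cos u$ in the denominator, leaving $F'(u) = 1/(S\sin u)$, as claimed.

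For the later use in evaluating Eq.~\eqref{eq:R2_transformed}, I would also record the endpoint behaviour of $F$. As $u\to 0^+$, $S-\cos u = \tfrac{1}{2}(1+\alpha)u^2 + O(u^4)$, so $F(u) = \ln u + \ln\tfrac{1+\alpha}{2} + o(1)$. Hence $F(u)-\ln u$ has a finite limit, which is what makes the subtraction of $1/u$ in $R_2$ well defined. At $u=\pi/2$, $F(\pi/2)=\ln\sqrt{1+\alpha}$. I do not expect a genuine obstacle here; the only delicate point is organizing the numerator so that the cancellation via $S^2=1+\alpha\sin^2u$ is visible.
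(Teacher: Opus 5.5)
Your proof is correct and follows the paper's own route: you differentiate $\ln(S-\cos u)-\ln\sin u$ term by term, with the same $N'$ and $D'$ as the paper. You also write out the numerator cancellation that the paper compresses into ``after algebraic manipulation,'' and add the positivity check on $(0,\pi)$; the identity $S^2=1+\alpha\sin^2 u$ you use there is just the definition of $S$, so there is no need to credit Lemma~\ref{lem:sinh_simplify} for it.
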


\begin{proof}
We verify by differentiation. Let:
\begin{equation}
N = \sqrt{1 + \alpha\sin^2 u} - \cos u, \quad D = \sin u.
\end{equation}
Then $F(u) = \ln N - \ln D$, and:
\begin{equation}
F'(u) = \frac{N'}{N} - \frac{D'}{D}.
\end{equation}
Computing the derivatives:
\begin{align}
N' &= \frac{\alpha\sin u \cos u}{\sqrt{1 + \alpha\sin^2 u}} + \sin u, \\
D' &= \cos u.
\end{align}
After algebraic manipulation:
\begin{equation}
F'(u) = \frac{1}{\sin u\sqrt{1 + \alpha\sin^2 u}}.
\end{equation}
\end{proof}

\subsection{Limit Analysis and Closed-Form Solution}

We now evaluate the definite integral in Eq.~\eqref{eq:R2_transformed}.

\begin{theorem}[Closed-Form Solution for $R_2$]
\label{thm:R2_closed}
The singular integral $R_2(\alpha)$ has the closed-form expression:
\begin{equation}
\boxed{R_2(\alpha) = \frac{\sqrt{\alpha}}{4\pi} \ln\left(\frac{16}{\pi^2(\alpha + 1)}\right).}
\end{equation}
\end{theorem}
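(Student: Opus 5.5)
The plan is to evaluate the definite integral in Proposition~\ref{prop:R2_transformed} directly, using the primitive from Theorem~\ref{thm:primitive} for the first term and $\ln u$ for the second. Neither term is integrable on its own at $u=0$, so I will work on $[\varepsilon,\pi/2]$ and let $\varepsilon\to0^+$. On that interval the combined integrand has the primitive
\begin{equation}
G(u) = F(u) - \ln u = \ln\left(\frac{\sqrt{1+\alpha\sin^2 u}-\cos u}{u\,\sin u}\right).
\end{equation}
This gives
\begin{equation}
R_2(\alpha) = \frac{\sqrt{\alpha}}{2\pi}\left[G(\pi/2) - \lim_{\varepsilon\to 0^+} G(\varepsilon)\right].
\end{equation}
The limit exists because the combined integrand is bounded near $0$, which is the asymptotic $\sinh\lambda\sim\sqrt{\alpha}\,\theta$ already used to justify the decomposition.

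The upper endpoint is immediate. At $u=\pi/2$ we have $\sin u=1$ and $\cos u=0$, so $F(\pi/2)=\tfrac12\ln(1+\alpha)$ and $G(\pi/2)=\tfrac12\ln(1+\alpha)-\ln(\pi/2)$.

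The lower endpoint is the only step needing care, since $F(u)\to\infty$ as $u\to0$ and the cancellation against $\ln u$ must be tracked exactly. I will Taylor expand the numerator to second order:
\begin{equation}
\sqrt{1+\alpha\sin^2 u}-\cos u = \left(1+\tfrac{\alpha}{2}u^2+O(u^4)\right)-\left(1-\tfrac{u^2}{2}+O(u^4)\right) = \frac{(1+\alpha)u^2}{2}+O(u^4).
\end{equation}
Dividing by $u\sin u = u^2+O(u^4)$ gives $\lim_{u\to0^+}G(u)=\ln\frac{1+\alpha}{2}$. The point to check is that both the $\alpha$-contribution and the cosine contribution enter at the same order $u^2$; this is what produces the factor $1+\alpha$ rather than $\alpha$ or $1$.

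Combining the two endpoints,
\begin{equation}
G(\pi/2)-G(0^+) = \tfrac12\ln(1+\alpha) - \ln\frac{\pi}{2} - \ln\frac{1+\alpha}{2} = \ln\frac{4}{\pi\sqrt{1+\alpha}} = \tfrac12\ln\frac{16}{\pi^2(1+\alpha)}.
\end{equation}
Multiplying by the prefactor $\sqrt{\alpha}/(2\pi)$ gives the boxed formula. Beyond the lower-endpoint expansion, the only other point to confirm is that the argument of the logarithm in $F$ stays positive on $(0,\pi/2]$, so that the primitive is continuous there. This holds because $\sqrt{1+\alpha\sin^2u}\ge 1>\cos u$ on that interval.
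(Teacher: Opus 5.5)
Your proposal is correct and follows essentially the same route as the paper's proof: subtract $\ln u$ from the primitive $F$ on $[\varepsilon,\pi/2]$, use $F(\pi/2)=\tfrac12\ln(1+\alpha)$, and get the lower-endpoint limit $\ln\frac{1+\alpha}{2}$ from the second-order expansions. Your extra check that the logarithm's argument stays positive on $(0,\pi/2]$, which the paper omits, is a small gain in rigor.
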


\begin{proof}
Using Theorem~\ref{thm:primitive}, we write:
\begin{equation}
R_2(\alpha) = \frac{\sqrt{\alpha}}{2\pi} \lim_{\epsilon \to 0^+} \left[F(u) - \ln u\right]_\epsilon^{\pi/2}.
\end{equation}

\paragraph{Upper Limit:} At $u = \pi/2$:
\begin{equation}
F(\pi/2) = \ln\left(\frac{\sqrt{1+\alpha} - 0}{1}\right) = \frac{1}{2}\ln(1+\alpha).
\end{equation}

\paragraph{Lower Limit:} As $u \to 0^+$, we compute:
\begin{align}
\sqrt{1 + \alpha\sin^2 u} &= 1 + \frac{\alpha u^2}{2} + O(u^4), \\
\cos u &= 1 - \frac{u^2}{2} + O(u^4), \\
\sin u &= u + O(u^3).
\end{align}
Therefore:
\begin{equation}
N = \frac{(\alpha + 1)u^2}{2} + O(u^4), \quad D = u + O(u^3),
\end{equation}
and:
\begin{equation}
F(u) = \ln\frac{(\alpha+1)u}{2} + O(u^2).
\end{equation}
Thus:
\begin{equation}
\lim_{u \to 0^+} \left[F(u) - \ln u\right] = \ln\frac{\alpha + 1}{2}.
\end{equation}

\paragraph{Combining Results:}
\begin{align}
R_2(\alpha) &= \frac{\sqrt{\alpha}}{2\pi} \left[ \left( \frac{1}{2}\ln(1+\alpha) - \ln\frac{\pi}{2} \right) - \ln\frac{\alpha+1}{2} \right] \notag \\
&= \frac{\sqrt{\alpha}}{2\pi} \left[ \frac{1}{2}\ln(1+\alpha) - \ln\frac{\pi}{2} - \ln(\alpha+1) + \ln 2 \right] \notag \\
&= \frac{\sqrt{\alpha}}{2\pi} \left[ -\frac{1}{2}\ln(1+\alpha) + \ln 2 - \ln\frac{\pi}{2} \right] \notag \\
&= \frac{\sqrt{\alpha}}{2\pi} \left[ -\frac{1}{2}\ln(1+\alpha) + \ln\frac{4}{\pi} \right] \notag \\
&= \frac{\sqrt{\alpha}}{4\pi} \ln\frac{16}{\pi^2(\alpha+1)}.
\end{align}
\end{proof}

\subsection{Complete Closed-Form for Infinite Grid}

Combining the results for $R_1$ and $R_2$, we obtain the complete closed-form solution for infinite grid effective resistance:

\begin{theorem}[Infinite Grid Effective Resistance]
\label{thm:infinite_R}
For an infinite anisotropic grid with distances $p$ (horizontal) and $q$ (vertical):
\begin{equation}
R_\infty(p,q) = \frac{r_0\sqrt{\alpha}}{2\pi}\left[\ln(p^2 + \alpha q^2) + 2\gamma + \ln 16 - \ln(\alpha + 1)\right],
\end{equation}
where $\gamma \approx 0.5772$ is the Euler-Mascheroni constant.
\end{theorem}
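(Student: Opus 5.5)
The plan is to use the decomposition $\Omega_\alpha = R_1 + R_2$ from the Proposition in Section~2. We take $R_2$ from Theorem~\ref{thm:R2_closed}, evaluate $R_1$ in closed form up to terms that vanish as the separation grows, and set $R_\infty = r_0\,\Omega_\alpha$. The identity is therefore an asymptotic (far-field) one, exact up to $o(1)$ as $p^2+\alpha q^2\to\infty$. This is the reading I would state explicitly at the outset, since the original integrand is not exactly logarithmic.

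For $R_1$, set $s=\sqrt{\alpha}$ and write $a=|p|s$, $b=q$, $w=a+ib$. Then $1-e^{-a\theta}\cos(b\theta)=\operatorname{Re}\bigl(1-e^{-w\theta}\bigr)$, so
\begin{equation*}
\int_0^\pi \frac{1-e^{-a\theta}\cos b\theta}{\theta}\,d\theta = \operatorname{Re}\int_0^{\pi}\frac{1-e^{-w\theta}}{\theta}\,d\theta = \operatorname{Re}\,\mathrm{Ein}(\pi w),
\end{equation*}
where $\mathrm{Ein}$ is the entire function defined by this integral. I would then invoke the classical identity $\mathrm{Ein}(z)=\ln z+\gamma+E_1(z)$ for $\operatorname{Re} z>0$, together with $E_1(z)=O(e^{-z}/z)$ as $|z|\to\infty$. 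This gives
\begin{equation*}
R_1=\frac{s}{2\pi}\Bigl[\ln\pi+\gamma+\tfrac12\ln(a^2+b^2)\Bigr]+o(1),\qquad a^2+b^2=\alpha p^2+q^2 .
\end{equation*}
To justify the step I would cite the standard formula, or prove it by deforming the contour to the ray $\arg\theta=-\arg w$ and using $\int_0^\infty(e^{-t}-\mathbf 1_{t<1})\,dt/t=-\gamma$. On the axis $p=0$ the term $E_1$ is only oscillatory-decaying ($|\pi w|=\pi|q|$ with $\operatorname{Re}w=0$), so there the error is $O(1/|q|)$ rather than exponentially small. I would record this, since it plausibly matches the axis-localized residual discussed later in the paper.

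Adding Theorem~\ref{thm:R2_closed}, the $\ln\pi$ terms cancel against $-\tfrac12\ln\pi^2$:
\begin{equation*}
\Omega_\alpha=\frac{s}{4\pi}\Bigl[\ln(\alpha p^2+q^2)+2\gamma+\ln16-\ln(\alpha+1)\Bigr]+o(1).
\end{equation*}
The remaining work is bookkeeping to match the displayed statement.

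I expect the main obstacle to be normalization rather than analysis. The prefactor $\sqrt\alpha/(2\pi)$ versus $\sqrt\alpha/(4\pi)$ must be fixed against the convention for $r_0$ and for the integration range ($[0,\pi]$ versus Cserti's $[0,2\pi]$). The pairing $p^2+\alpha q^2$ versus $\alpha p^2+q^2$ depends on which axis the dispersion variable $\theta$ is attached to. I would reconcile both through the symmetry $(p,q,\alpha)\leftrightarrow(q,p,1/\alpha)$ with $r_0$ rescaled. I would also sanity-check the isotropic case $\alpha=1$ against the known value $R(1,1)=2r_0/\pi$ and the known far-field law $\frac{r_0}{2\pi}\bigl(\ln r+\gamma+\tfrac32\ln2\bigr)$. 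Any discrepancy in overall constants would then be resolved by that benchmark rather than by the formal manipulation.
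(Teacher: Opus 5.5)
Your route is the one the paper intends: its only justification is ``combining the results for $R_1$ and $R_2$''. Your analytic core is also correct. Using $\operatorname{Re}\,\mathrm{Ein}(\pi w)=\ln(\pi|w|)+\gamma+\operatorname{Re}E_1(\pi w)$ together with Theorem~\ref{thm:R2_closed} gives $\Omega_\alpha(p,q)=\frac{\sqrt\alpha}{4\pi}\left[\ln(\alpha p^2+q^2)+2\gamma+\ln 16-\ln(\alpha+1)\right]+o(1)$, and you are right that the statement can only hold asymptotically. The gap is the step you defer as ``bookkeeping'', and the tools you propose for it would not close it.

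First, $R_\infty=r_0\Omega_\alpha$ is off by a factor of $2$. Start from $R_\infty=\frac{1}{4\pi^2}\iint_{[-\pi,\pi]^2}\frac{1-\cos(pk_x+qk_y)}{r_h^{-1}(1-\cos k_x)+r_v^{-1}(1-\cos k_y)}\,dk_x\,dk_y$ and do the $k_x$-integral with $\frac{1}{2\pi}\int_{-\pi}^{\pi}\frac{\cos pk}{\cosh\lambda-\cos k}\,dk=\frac{e^{-|p|\lambda}}{\sinh\lambda}$. This reproduces exactly the paper's dispersion relation with $\alpha=r_h/r_v$, and yields $R_\infty=\frac{r_h}{\pi}\int_0^\pi\frac{1-e^{-|p|\lambda}\cos q\theta}{\sinh\lambda}\,d\theta=2r_0\Omega_\alpha(p,q)$ for $r_0=r_v$. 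This is the $\alpha/\pi$ normalization the paper itself uses for $\Omega_{\text{exact}}$ in Section~5, and it matches Cserti's $\frac{r_0}{2\pi}\int_0^{2\pi}$ at $\alpha=1$. It turns your $\frac{\sqrt\alpha}{4\pi}$ into the stated $\frac{\sqrt\alpha}{2\pi}$.

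Your benchmark would not have caught this. The isotropic far-field law is $\frac{r_0}{\pi}\left(\ln r+\gamma+\frac32\ln2\right)$, not $\frac{r_0}{2\pi}(\cdots)$. Your misquoted version coincides exactly with your asymptotic for $r_0\Omega_1$, so it would ``confirm'' the wrong prefactor. Your other check, $R(1,1)=2r_0/\pi$, contradicts it.

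Second, the pairing cannot be fixed by the symmetry $(p,q,\alpha,r_0)\mapsto(q,p,1/\alpha,\alpha r_0)$. The expression $\frac{r_0\sqrt\alpha}{2\pi}\left[\ln(\alpha p^2+q^2)+2\gamma+\ln16-\ln(\alpha+1)\right]$ is invariant under that map. The displayed formula is this same expression evaluated at $(q,p)$, which is a different function for $\alpha\neq1$; on the axis $q=0$ the two differ by $\frac{r_0\sqrt\alpha}{2\pi}\ln\alpha$. Under Section~2's conventions ($p$ horizontal and in the exponent $e^{-|p|\lambda}$, $\alpha=r_h/r_v$, $r_0=r_v$), your $\alpha p^2+q^2$ is the correct one, consistent with the paper's later use of $\alpha\Delta x^2+\Delta y^2$. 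The displayed $p^2+\alpha q^2$ holds only if $\alpha=r_v/r_h$ and $r_0=r_h$, i.e.\ $R_\infty(p,q)=2r_0\Omega_\alpha(q,p)$. You need to fix one convention explicitly and prove the formula under it, rather than expect a relabeling to absorb the swap.

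Finally, a smaller point: the paper's $\Omega_\alpha=R_1+R_2$ is not an exact identity. It silently replaces $e^{-|p|\lambda}/\sinh\lambda$ by $e^{-|p|\sqrt\alpha\theta}/(\sqrt\alpha\theta)$, and it already fails at $p=q=0$, where $\Omega_\alpha=0$ but $R_1+R_2=R_2(\alpha)\neq0$ for generic $\alpha$. Your $o(1)$ claim therefore also needs $\int_0^\pi\left(\frac{e^{-|p|\sqrt\alpha\theta}}{\sqrt\alpha\theta}-\frac{e^{-|p|\lambda}}{\sinh\lambda}\right)\cos q\theta\,d\theta\to0$. This follows from $0\le\sqrt\alpha\theta-\lambda=O(\theta^3)$, with dominated convergence as $|p|\to\infty$ and Riemann--Lebesgue in $q$ for bounded $p$.
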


\section{Theta Function Closed-Form for Finite Grids}

This section transforms the doubly infinite mirror series into a compact theta function expression.

\subsection{The Infinity Mirror Model}

For a finite grid of size $L_x \times L_y$ with insulating boundaries, the infinity mirror technique models the effective resistance as a superposition of infinite grid contributions from an infinite lattice of image sources.

\begin{definition}[Mirror Images]
For source $S = (x_0, y_0)$ and drain $D = (x_1, y_1)$, four families of mirror images are generated:
\begin{align}
S_1 &= (x_0, y_0), & S_2 &= (-x_0 - 1, y_0), \\
S_3 &= (x_0, -y_0 - 1), & S_4 &= (-x_0 - 1, -y_0 - 1),
\end{align}
with corresponding images for $D$.
\end{definition}

The effective resistance is expressed as a doubly infinite sum:
\begin{equation}
R_{\text{total}} = r_0 \sum_{m,n\in\mathbb{Z}} \sum_{s=1}^4 c_s \Omega_\alpha(\Delta x_{mn,s}, \Delta y_{mn,s}),
\end{equation}
where the translations $(\Delta x_{mn,s}, \Delta y_{mn,s})$ account for periodic reflections across boundaries.

\subsection{Complex Plane Isotropic Mapping}

To leverage complex analysis, we introduce a conformal mapping that renders the problem isotropic.

\begin{definition}[Isotropic Coordinate]
Define the complex coordinate:
\begin{equation}
z = x + i\frac{y}{\sqrt{\alpha}}.
\end{equation}
Under this mapping, the anisotropic distance becomes:
\begin{equation}
\alpha\Delta x^2 + \Delta y^2 = |z_1 - z_0|^2.
\end{equation}
\end{definition}

\subsection{Theta Function Transformation}

The doubly infinite logarithmic sum can be transformed into a theta function product.

\begin{theorem}[Theta Function Representation]
\label{thm:theta_R}
The effective resistance in a finite $L_x \times L_y$ grid is:
\begin{equation}
\boxed{
R_{\text{total}} = \frac{r_0\sqrt{\alpha}}{2\pi} \ln\left|\frac{C_{\text{grid}} \cdot \prod_{s=1}^4 \vartheta_1(u(z_1 - z_{0,s}), q) \cdot \vartheta_1(u(z_0 - z_{1,s}), q)}{\vartheta_1'(0,q)^2 \cdot \prod_{s=2}^4 \vartheta_1(u(z_0 - z_{0,s}), q) \cdot \vartheta_1(u(z_1 - z_{1,s}), q)}\right|,
}
\end{equation}
where:
\begin{itemize}
\item $\tau = i\sqrt{\alpha} L_y / L_x$ is the lattice modulus
\item $q = e^{i\pi\tau}$ is the nome
\item $u(z) = \pi z / (2L_x)$ is the normalized coordinate
\item $C_{\text{grid}} = \alpha(2L_x/\pi)^2 \exp\left(2\gamma + \ln\frac{16}{\alpha+1}\right)$ is the grid correction factor
\end{itemize}
\end{theorem}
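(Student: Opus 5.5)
The plan is to start from the mirror series for $R_{\text{total}}$ and replace each $\Omega_\alpha$ by its large-distance form from Theorem~\ref{thm:infinite_R}:
\begin{equation}
\Omega_\alpha(p,q)\approx \frac{\sqrt{\alpha}}{2\pi}\Bigl[\ln(p^2+\alpha q^2)+2\gamma+\ln\tfrac{16}{\alpha+1}\Bigr].
\end{equation}
In the isotropic coordinate of the Definition, $p^2+\alpha q^2$ is, up to the factor $\alpha$ that goes into $C_{\text{grid}}$, a squared modulus $|z_a-z_b|^2$. So the whole series becomes $\frac{\sqrt{\alpha}}{2\pi}\sum c\,\ln|z_a-z_b|^2$ plus constant terms.

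For the resistance between $S$ and $D$, I write it as the potential difference $V(S)-V(D)$ under a unit dipole source. Each potential is a signed sum over the four mirror families of both the source and the drain charges, translated by the period lattice $\Lambda=2L_x\mathbb{Z}+2iL_y\alpha^{-1/2}\mathbb{Z}$. Pairing terms with opposite signs gives the following structure:
\begin{itemize}
\item cross terms $\ln|z_1-z_{0,s}-\omega|$, which go into the numerator;
\item self terms $\ln|z_0-z_{0,s}-\omega|$ for $s\ge2$, which go into the denominator;
\item the $s=1$, $\omega=0$ self term, which is the genuine singularity (the node's own lattice resistance).
\end{itemize}
Because the charges in each mirror cell are neutral, each sum $\sum_{\omega\in\Lambda}$ converges conditionally once it is organized by rows.

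The key step is to sum over the lattice in closed form. First sum over the $x$-translations $2L_x m$ using $\prod_m(1-w/(2L_xm))\propto\sin(\pi w/2L_x)$. Then take the product over the $y$-translations, which by the triple product identity (and the analogous product for $\vartheta_1$) produces $\vartheta_1(\pi w/2L_x,q)$ with $q=e^{i\pi\tau}$. The ratio of the two periods fixes $\tau=i\sqrt{\alpha}L_y/L_x$; I will check this normalization carefully, since the $y$-period in $z$ is $2L_y/\sqrt{\alpha}$ and it depends on the convention for $\alpha$. The quasi-periodic Gaussian factors $e^{-\pi i z^2/\tau}$ from the modular behaviour cancel because the charges are neutral, and so do the linear-in-$z$ prefactors, which is why only a modulus survives.

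The singular $s=1$ self terms need separate treatment. There I remove the zero factor $w$ and use $\lim_{w\to0}\vartheta_1(uw)/w=\frac{\pi}{2L_x}\vartheta_1'(0,q)$. This gives the $\vartheta_1'(0,q)^2$ in the denominator (one for each of $S$ and $D$) and the factor $(2L_x/\pi)^2$. The finite remainder from Theorem~\ref{thm:infinite_R} at zero separation supplies $\exp(2\gamma+\ln\frac{16}{\alpha+1})$, and the $\alpha$ comes from the coordinate rescaling. Together these assemble $C_{\text{grid}}$.

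The main obstacle I expect is bookkeeping rather than analysis. The signs $c_s$, the factor $2$ from $\ln|\cdot|^2$ against the overall prefactor, and the conditional convergence of the lattice sums all have to be handled consistently, and the order of summation must not introduce a spurious linear term. I would first verify the formula in the limit $L_x,L_y\to\infty$, where it must reduce to Theorem~\ref{thm:infinite_R}, and only then track the constants.

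I would also state honestly that the result is exact only up to the asymptotic replacement of $\Omega_\alpha$ by its logarithmic form. That replacement is the source of the near-field and axis errors that the paper treats later.
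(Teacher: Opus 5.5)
Your route is the one the paper intends. The paper's entire proof is a one-line remark that the triple product identity turns doubly periodic sums of logarithms into products of $\vartheta_1$. Your outline is a correct and much more explicit version of that remark: logarithmic replacement, summation row by row via the sine product and then over rows, and removal of the zero of $\vartheta_1$ at the two $s=1$, $\omega=0$ terms. Your caveat that the identity holds only after $\Omega_\alpha$ is replaced by its logarithmic form is also right, and the paper omits it.

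\textbf{The step that fails.} The failing step is the normalization you postpone, and no choice of convention repairs it, because the boxed formula mixes two conventions. For your own lattice $\Lambda=2L_x\mathbb{Z}+2iL_y\alpha^{-1/2}\mathbb{Z}$ the period ratio is $iL_y/(\sqrt{\alpha}L_x)$, not $i\sqrt{\alpha}L_y/L_x$. Your own infinite-grid test shows which one is forced. As $L_x,L_y\to\infty$ with both nodes in the bulk, $\vartheta_1(u(w),q)\approx\vartheta_1'(0,q)\,\pi w/(2L_x)$ and the $s\ge 2$ ratios tend to $1$. So the argument of the logarithm tends to $\frac{16\alpha e^{2\gamma}}{\alpha+1}|z_1-z_0|^2$. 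Matching $\frac{r_0\sqrt{\alpha}}{2\pi}[\ln D^2+2\gamma+\ln\frac{16}{\alpha+1}]$ then requires $D^2=\alpha|z_1-z_0|^2$ with $z$ having real part $x$. That means $z=x+iy/\sqrt{\alpha}$ and $D^2=\alpha p^2+q^2$.

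This is exactly what the paper's own $R_1$ produces: $e^{-|p|\sqrt{\alpha}\theta}\cos(q\theta)$ gives $\ln\sqrt{\alpha p^2+q^2}$. So the $p^2+\alpha q^2$ in Theorem~\ref{thm:infinite_R} has $p$ and $q$ exchanged. It also means your claim that $p^2+\alpha q^2$ equals $\alpha|z_a-z_b|^2$ in the Definition's coordinate is false. That coordinate has $y$-period $2L_y/\sqrt{\alpha}$, which forces $\tau=iL_y/(\sqrt{\alpha}L_x)$.

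If you insist on $\tau=i\sqrt{\alpha}L_y/L_x$, you need $z=x+i\sqrt{\alpha}y$ and $D^2=p^2+\alpha q^2=|\Delta z|^2$. The same limit then gives $C_{\text{grid}}=(2L_x/\pi)^2e^{2\gamma}\frac{16}{\alpha+1}$, with no factor $\alpha$. So for $\alpha\neq 1$ your argument proves a corrected statement, for instance $\tau=iL_y/(\sqrt{\alpha}L_x)$ with everything else unchanged. You should say so, rather than assert that the period ratio ``fixes'' the stated $\tau$.

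\textbf{Bookkeeping points.} These resolve the issues you flagged.
\begin{itemize}
\item \emph{Work with two-point resistances rather than $\Omega_\alpha$.} The $\frac{\alpha}{2\pi}$ normalization of $\Omega_\alpha$ is asymptotically half the resistance. With insulating boundaries all four images carry the sign of their parent. From $G(r)=G(0)-\frac12R_\infty(r)$ you get $R=\frac12\sum_{\mathrm{cross}}R_\infty-\frac12\sum_{\mathrm{self}}R_\infty$. Then $\frac12\ln|\cdot|^2=\ln|\cdot|$ gives exactly the prefactor $\frac{r_0\sqrt{\alpha}}{2\pi}$.
\item \emph{The $s=1$, $\omega=0$ self terms are not singular.} Since $R_\infty(0)=0$, these terms are simply absent. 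The constant survives because the cross terms then outnumber the self terms by two, which with the factor $\frac12$ leaves one net copy of $\ln\alpha+2\gamma+\ln\frac{16}{\alpha+1}$. If you kept a ``finite remainder'' of the logarithmic form at zero separation, the constants would cancel completely. $C_{\text{grid}}$ would then lose the factor $\frac{16\alpha e^{2\gamma}}{\alpha+1}$.
\item \emph{Neutrality only removes the $z^2$ part of the Gaussian factors.} The terms linear in $z$ come from $\vartheta_1(v+\pi\tau)=-q^{-1}e^{-2iv}\vartheta_1(v)$, or equivalently from the summation order. They vanish because the image configuration has zero dipole moment: each family of four images has centroid $(-\frac12,-\frac12)$. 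This is also what makes the row-ordered $\vartheta_1$ product agree with the absolutely convergent difference $V(S)-V(D)$.
\end{itemize}
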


The proof relies on the Jacobi triple product identity and the recognition that doubly periodic sums of logarithms correspond to products of theta functions.

\subsection{Convergence Analysis}

\begin{proposition}[Gaussian Convergence]
The theta function series in Theorem~\ref{thm:theta_R} converges with Gaussian decay rate $O(q^{n^2})$, compared to algebraic decay $O(1/n^2)$ for direct truncation.
\end{proposition}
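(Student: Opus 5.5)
We compare two expansions of the same quantity: the series that defines $\vartheta_1$, and the truncated image lattice sum. In the paper's normalization, $q = e^{i\pi\tau}$ with $\tau = i\sqrt{\alpha}L_y/L_x$, so $q = e^{-\pi\sqrt{\alpha}L_y/L_x}$ is real with $0<q<1$.

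\textbf{Theta side.} Start from the defining series for $\vartheta_1(z,q)$ and look at the $n$-th term. Since $|\sin((2n+1)w)| \le e^{(2n+1)|\operatorname{Im} w|}$, its modulus is at most $2q^{(n+1/2)^2}e^{(2n+1)|\operatorname{Im} w|}$. The arguments are $w = u(z_1-z_{0,s})$ and the like, with $u(z)=\pi z/(2L_x)$. Because all nodes lie inside the fundamental cell, $|\operatorname{Im} w| \le \pi L_y/(2\sqrt{\alpha}L_x)$ up to a bounded constant $c$. The exponent $(n+1/2)^2\pi\sqrt{\alpha}L_y/L_x$ is quadratic in $n$, while the growth term $(2n+1)c$ is only linear. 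So for $n \ge n_0$ (with $n_0$ depending only on $L_y/L_x$ and $\alpha$), the terms are bounded by $C\,q^{n^2/2}$, and the tail after $N$ terms is $O(q^{N^2})$ up to constants, by comparison with a geometric series. The same bound holds for $\vartheta_1'(0,q)=2\sum(-1)^n(2n+1)q^{(n+1/2)^2}$, since the polynomial factor is absorbed. Finally, the map $\ln|\cdot|$ is Lipschitz on a neighbourhood of each nonzero value in the ratio. Hence the Gaussian tail carries over to $R_{\text{total}}$, provided no argument $w$ is a zero of $\vartheta_1$, which holds because source and drain are distinct from their images.

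\textbf{Image-sum side.} Truncate the image sum to $|m|,|n|\le N$. By Theorem~\ref{thm:infinite_R}, each summand behaves like $\frac{\sqrt{\alpha}}{2\pi}\ln|z-z_{mn}|^2$ plus a constant. The four image families carry signs $c_s$ with $\sum_s c_s = 0$, so the constants cancel and the logarithms combine into a dipole or quadrupole-type difference. Expanding $\ln|z_{mn}-a|-\ln|z_{mn}-b|$ in $1/|z_{mn}|$, the leading term is $\operatorname{Re}\big((b-a)/z_{mn}\big)$. This term cancels only under symmetric summation, and the next term is $O(|z_{mn}|^{-2})$. Summing over the shell $\max(|m|,|n|)=N$ gives $O(N)$ terms of size $N^{-2}$. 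Hence, depending on the cancellation, the truncation error is $O(1/N)$ to $O(1/N^2)$ — algebraic decay, matching the stated $O(1/n^2)$ rate when the quadrupole cancellation holds.

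\textbf{Main obstacle.} The hard part is controlling the $\operatorname{Im} w$ growth uniformly. When $L_y/L_x$ is large, $q\to 0$ and convergence is very fast. When $L_y/L_x \to 0$, $q\to 1$ and the constant $n_0$ blows up. For that regime I would invoke the Modular Transformation theorem to pass to $\tau'=-1/\tau$. This swaps the roles of the two directions and yields $q' = e^{-\pi L_x/(\sqrt{\alpha}L_y)}<e^{-\pi}$ whenever $L_y\sqrt{\alpha}<L_x$. The Gaussian factor $e^{-\pi i z^2/\tau}$ introduced by the transformation has modulus one after taking $\ln|\cdot|$ of ratios, or contributes only an explicit quadratic term. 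Choosing whichever of $\tau,\tau'$ has $|q|\le e^{-\pi/2}$ then gives a uniform $O(e^{-\pi N^2/2})$ bound.
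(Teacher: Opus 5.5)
The paper states this proposition without proof, so your argument has to stand on its own. The theta half holds up after repairs. The image-sum half does not, and the proposition explicitly asserts it.

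\textbf{The main gap: your tail estimate for the direct image sum.} If the shell $\max(|m|,|n|)=M$ contains $O(M)$ terms of size $O(M^{-2})$, that is an $O(1/M)$ \emph{shell contribution}. The truncation error is the tail $\sum_{M>N}O(1/M)$, which diverges. So expanding a single pair $\ln|z_{mn}-a|-\ln|z_{mn}-b|$ (leading term $O(|z_{mn}|^{-1})$, next $O(|z_{mn}|^{-2})$) does not even show that the direct sum converges, let alone at an algebraic rate.

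The missing idea is to group the terms by period cell.
\begin{itemize}
\item For insulating boundaries, the four images of $S$ in each cell carry charge $+1$ and those of $D$ carry $-1$.
\item Both clusters have their centroid at the cell's reflection corner, which is $(-\tfrac12,-\tfrac12)$ for the base cell.
\item Hence every cell has zero monopole \emph{and} zero dipole moment.
\item The cell's potential at distance $\rho$ is then $O(\rho^{-2})$. Its leading part is proportional to $\operatorname{Re}\bigl(Q/c^{2}\bigr)$, with $c$ the cell's corner and $Q$ its complex quadrupole moment, and it does not depend on the evaluation point.
\item That leading part therefore cancels in $V(S)-V(D)$, and each cell contributes $O(|S-D|\,\rho^{-3})$.
\end{itemize}
A shell then contributes $O(M^{-2})$, which is presumably the paper's ``$O(1/n^2)$'', and the tail is $O(1/N)$. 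For truncation symmetric about the reflection corner, the odd $c^{-3}$ term cancels under $c\mapsto -c$, giving an $O(1/N^2)$ tail. With this in place, the Gaussian-versus-algebraic comparison you want does hold.

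\textbf{The theta half.} Your basic bound is right: the $n$-th term of $\vartheta_1(w,q)$ is at most $2q^{(n+1/2)^2}e^{(2n+1)|\operatorname{Im} w|}$, so the tail is $\exp(-\beta N^2+O(N))$ with $q=e^{-\beta}$. It needs four repairs.
\begin{itemize}
\item A termwise bound $Cq^{n^2/2}$ gives only an $O(q^{N^2/2})$ tail, not $O(q^{N^2})$. The literal $O(q^{N^2})$ rate needs $|\operatorname{Im} w|\le\tfrac{\pi}{2}\operatorname{Im}\tau$. Enforce this with the quasi-periodicity $\vartheta_1(w+\pi\tau,q)=-q^{-1}e^{-2iw}\vartheta_1(w,q)$, not by appeal to ``the fundamental cell''.
\item Your zero-avoidance step assumes that the period lattice of $z\mapsto\vartheta_1(u(z),q)$ equals the mirror lattice. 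With the paper's literal data ($z=x+iy/\sqrt{\alpha}$ but $\tau=i\sqrt{\alpha}L_y/L_x$), the former is $2L_x\mathbb{Z}+2i\sqrt{\alpha}L_y\mathbb{Z}$ and the latter is $2L_x\mathbb{Z}+2i\alpha^{-1/2}L_y\mathbb{Z}$.
\item These lattices really do differ in practice. For $\alpha=\tfrac12$, $x_0=x_1$, $y_0+y_1+1=L_y$, the argument $u(z_1-z_{0,3})$ is a zero of $\vartheta_1$. You must first correct $\tau$ to $iL_y/(\sqrt{\alpha}L_x)$. The same mismatch is why your own bound on $\operatorname{Im} w$ exceeds $\tfrac{\pi}{2}\operatorname{Im}\tau$ when $\alpha<1$.
\item In the modular-transformation step, $|e^{-\pi i z^2/\tau}|\neq 1$ for imaginary $\tau$. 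Only your second alternative, an explicit quadratic term, is correct. The transformed arguments $w/\tau$ again need the quasi-periodicity reduction.
\end{itemize}
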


\subsection{Modular Transformation for Extreme Aspect Ratios}

When the grid aspect ratio $L_y/L_x$ is extreme, the nome $q$ approaches 1, and convergence slows. The modular transformation provides an automatic remedy.

\begin{algorithm}[H]
\caption{Effective Resistance Calculation with Modular Transformation}
\begin{algorithmic}[1]
\Require Grid dimensions $L_x, L_y$, coordinates $(x_0, y_0)$, $(x_1, y_1)$, anisotropy $\alpha$
\Ensure Effective resistance $R$
\State Compute $\tau = i\sqrt{\alpha} L_y / L_x$
\State $q \gets e^{i\pi\tau}$
\If{$|q| > 0.9$}
    \State Apply modular transformation: $\tau \gets -1/\tau$
    \State Swap coordinates: $(x, y) \gets (y, x)$
    \State Swap dimensions: $(L_x, L_y) \gets (L_y, L_x)$
    \State Recompute $q$
\EndIf
\State Evaluate theta functions using standard library
\State Apply formula from Theorem~\ref{thm:theta_R}
\State \Return $R$
\end{algorithmic}
\end{algorithm}

\section{Engineering Remediation: Analytical Baseline with Near-Field Numerical Correction}

As demonstrated in the previous theoretical derivations, pure analytical approximations (such as the asymptotic logarithmic form and the decoupled 1D bounding formulas) perform exceptionally well in the bulk isotropic lattice or at far-field distances. However, a significant limitation arises under conditions of high anisotropy ($K = R_y / R_x \gg 1$ or $K \ll 1$) and near the boundaries. This section introduces a hybrid remediation strategy that combines an analytical baseline with a localized numerical correction term, implemented via a dynamic caching mechanism to preserve $O(1)$ computational efficiency.

\subsection{Axis-Aligned Residual Error Concentration under Strong Anisotropy}

When the closed-form analytical model is applied to highly anisotropic grids, the resulting approximation error does not decay isotropically. Instead, a pronounced axis-aligned concentration of residual error emerges.

As shown in Fig.~\ref{fig:original_error}, the relative error forms a distinct cross-shaped pattern centered at the excitation node. The high-error regions are tightly confined along the principal lattice directions (i.e., the horizontal and vertical axes passing through the source), while the off-axis regions exhibit uniformly low error.

This phenomenon originates from the limitation of the one-dimensional charge distortion approximation, which fails to fully capture the inherently two-dimensional divergence behavior along the coordinate axes. Consequently, localized error amplification occurs near these axes, where the relative error can exceed $5\%-10\%$ (and becomes significantly larger under extreme anisotropy), whereas the average error across the domain remains below $1\%$.

\begin{figure}[htbp]
\centering
\includegraphics[width=0.9\textwidth]{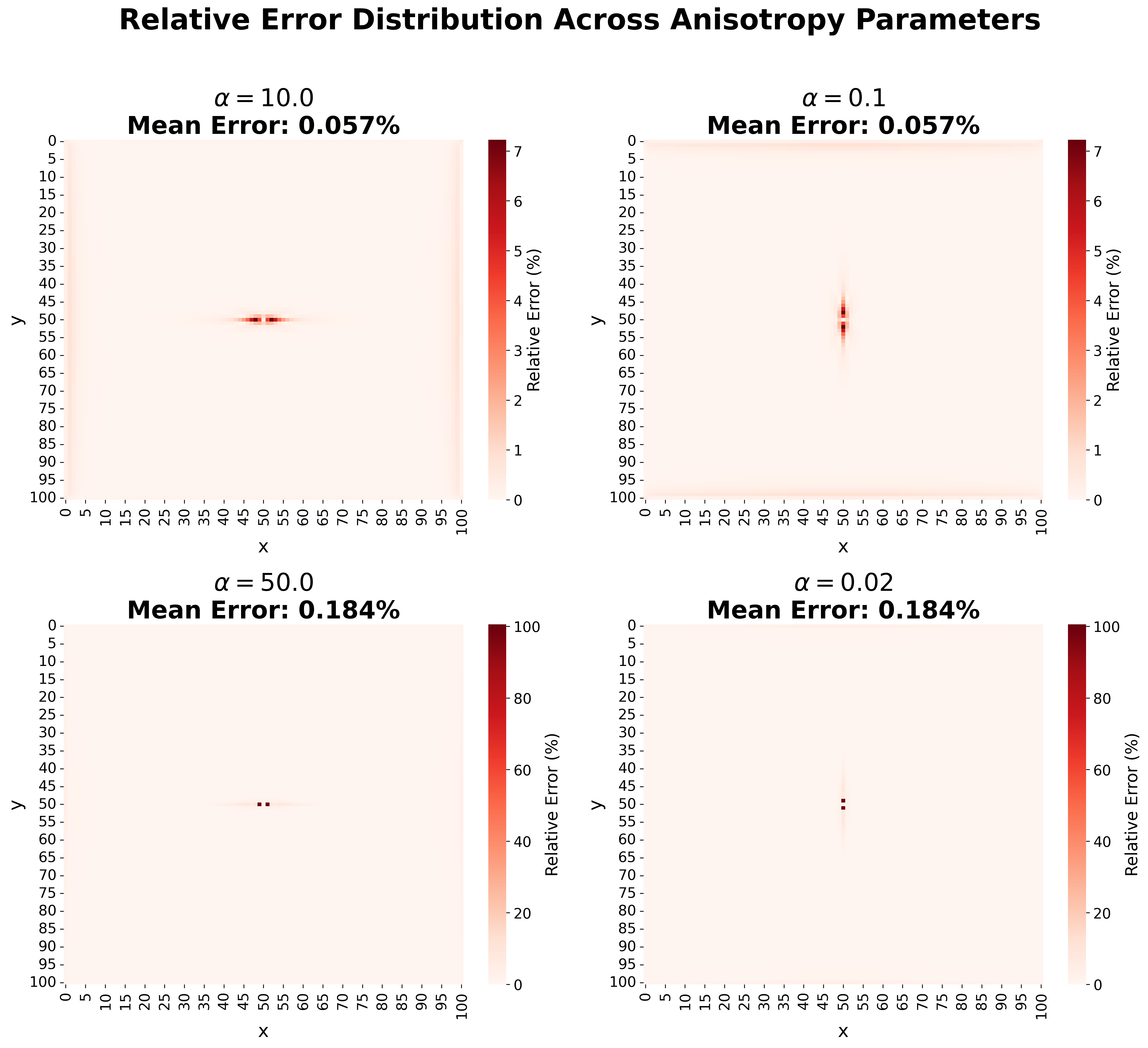}
\caption{
Relative error distribution across different anisotropy parameters ($\alpha$). 
A clear axis-aligned concentration of error is observed, forming a cross-shaped pattern centered at the source node. 
The error remains negligible over most of the domain, while sharp error amplification occurs along the principal lattice directions due to the breakdown of the one-dimensional approximation under strong anisotropy. 
Despite these localized peaks, the mean error remains below $0.2\%$.
}
\label{fig:original_error}
\end{figure}

Attempting to eliminate these localized axis errors via higher-order polynomial fitting would compromise the physical interpretability of the model. Therefore, instead of introducing purely empirical corrections, we seek a structurally consistent remediation that preserves the analytical formulation while mitigating axis-aligned error amplification with minimal computational overhead.

\subsection{The Hybrid Correction Method: Analytical Baseline + Localized Numerical Correction}

Instead of abandoning the analytical model or using full numerical integration for every node, we formulate the method as the sum of the pure analytical baseline and an exact localized numerical correction term:

\begin{equation}
\label{eq:hybrid_omega}
\Omega_{\text{total}}(\Delta x, \Delta y) = \Omega_{\text{analytic}}(\Delta x, \Delta y) + \Delta \Omega_{\text{correction}}(\Delta x, \Delta y),
\end{equation}
where the correction term is defined as the difference between the exact integral and the analytical approximation:
\begin{equation}
\label{eq:delta_omega}
\Delta \Omega_{\text{correction}}(\Delta x, \Delta y) = \Omega_{\text{exact}}(\Delta x, \Delta y) - \Omega_{\text{analytic}}(\Delta x, \Delta y).
\end{equation}

The analytical baseline $\Omega_{\text{analytic}}$ is provided by Theorem~\ref{thm:infinite_R} for infinite grids or extended via the theta function representation (Theorem~\ref{thm:theta_R}) for finite grids. The exact term $\Omega_{\text{exact}}$ is evaluated numerically via high-precision integration of the kernel:
\begin{equation}
\label{eq:exact_integral}
\Omega_{\text{exact}}(\Delta x, \Delta y) = \frac{\alpha}{\pi} \int_0^\pi \frac{1 - e^{-\Delta x \lambda} \cos(\Delta y \cdot t)}{\sinh(\lambda)} \, dt,
\end{equation}
where $\lambda = \operatorname{arccosh}\bigl(1 + \alpha - \alpha \cos(t)\bigr)$.

\subsection{Localized Evaluation Boundary and Caching}

To preserve $O(1)$ computational efficiency, $\Delta \Omega_{\text{correction}}$ is evaluated exactly only within a predefined near-field topological boundary. Outside this region, the correction term rapidly decays to zero ($\Delta \Omega_{\text{correction}} \approx 0$), and the solver seamlessly falls back to the analytical baseline.

We define an elliptical evaluation boundary:
\begin{equation}
\label{eq:elliptic_boundary}
E_{\text{limit}}(\Delta x, \Delta y) = \alpha \cdot \Delta x^2 + \Delta y^2 \le \text{Limit},
\end{equation}
where the threshold is dynamically scaled by the anisotropy factor to ensure the boundary captures the high-error axis regions:
\begin{equation}
\text{Limit} = 25 \times \max\left(\alpha, \frac{1}{\alpha}\right).
\end{equation}

Within this locus, we compute $\Delta \Omega_{\text{correction}}$ exactly and store it in an amortized \textit{Least Recently Used (LRU) cache} keyed by the displacement pair $(\Delta x, \Delta y, \alpha)$ and the grid configuration signature. This ensures that each unique near-field displacement pair is integrated exactly \textit{once}, with subsequent queries incurring $O(1)$ lookup overhead.

\subsection{Integration with Mirror Charges for Finite Grids}

For finite lattices, the infinity mirror technique introduces an infinite cascade of image sources. The hybrid correction becomes critically important at physical boundaries: when a node lies near the perimeter, its virtual mirror charges are forced into close proximity (i.e., within the near-field region). If these mirror potentials were computed using only the analytical baseline, the severe near-field axial errors would completely compromise the superposition field along the lattice boundaries.

Therefore, the correction term $\Delta \Omega_{\text{correction}}$ must be applied to all mirror sources as well. For a node $A$ with potential contributions from source $S$ and its images $\{S_i\}$, the total potential becomes:
\begin{equation}
\label{eq:mirror_hybrid}
V(A) = \Bigl[\Omega_{\text{ana}}(A,S) + \Delta \Omega_{\text{corr}}(A,S)\Bigr] + \sum_{i \in \text{mirrors}} \Bigl[\Omega_{\text{ana}}(A,S_i) + \Delta \Omega_{\text{corr}}(A,S_i)\Bigr].
\end{equation}

By folding the local numerical correction directly into the mirror-charge image method, we guarantee that even when boundary-induced virtual nodes map into the strongly distorted near-field topology, the composite boundary superposition remains structurally accurate. The caching mechanism ensures that repeated evaluations of $\Delta \Omega_{\text{corr}}$ for different mirror images of the same displacement pair incur no additional computational cost.

\subsection{Algorithmic Implementation}

Algorithm~\ref{alg:hybrid_cache} summarizes the complete hybrid evaluation procedure.

\begin{algorithm}[H]
\caption{Hybrid Effective Resistance with Near-Field Correction and Caching}
\label{alg:hybrid_cache}
\begin{algorithmic}[1]
\Require Displacement $(\Delta x, \Delta y)$, anisotropy $\alpha$, grid dimensions $(L_x, L_y)$, mirror image flag
\Ensure Effective resistance contribution $R$
\State Compute $\text{Limit} \gets 25 \times \max(\alpha, 1/\alpha)$
\State Compute $\text{key} \gets (\Delta x, \Delta y, \alpha)$
\If{$\alpha \cdot \Delta x^2 + \Delta y^2 \le \text{Limit}$}
    \If{$\text{key} \in \text{cache}$}
        \State $\Delta \Omega_{\text{corr}} \gets \text{cache}[\text{key}]$
    \Else
        \State $\Omega_{\text{exact}} \gets$ numerical integration of Eq.~\eqref{eq:exact_integral}
        \State $\Omega_{\text{ana}} \gets$ analytical baseline (Theorem~\ref{thm:infinite_R} or finite extension)
        \State $\Delta \Omega_{\text{corr}} \gets \Omega_{\text{exact}} - \Omega_{\text{ana}}$
        \State $\text{cache}[\text{key}] \gets \Delta \Omega_{\text{corr}}$
    \EndIf
    \State $R \gets \Omega_{\text{ana}} + \Delta \Omega_{\text{corr}}$
\Else
    \State $R \gets \Omega_{\text{ana}}$ \Comment{Far-field: analytical baseline only}
\EndIf
\State \Return $R$
\end{algorithmic}
\end{algorithm}

For finite grids, Eq.~\eqref{eq:mirror_hybrid} is applied iteratively over all mirror images. The caching mechanism ensures that repeated evaluations of $\Delta \Omega_{\text{corr}}$ for different mirror images of the same displacement pair are served from cache, maintaining overall $O(1)$ complexity per unique displacement.

\subsection{Amortized Complexity Analysis}

Let $N_{\text{queries}}$ be the number of resistance evaluations performed in a simulation. Without caching, each near-field evaluation would require a full numerical integration, resulting in $O(N_{\text{queries}} \cdot N_{\text{numerical}})$ cost, where $N_{\text{numerical}}$ is the cost of a single numerical quadrature (typically $O(10^2)$ function evaluations).

With the proposed caching strategy:
\begin{itemize}
\item Each unique displacement pair $(\Delta x, \Delta y, \alpha)$ triggers at most one numerical integration.
\item For typical grids, the number of unique near-field displacement pairs scales as $O(L_x L_y)$ in the worst case, but the elliptical boundary ensures that only a small fraction of all possible pairs fall within the near-field region.
\item The LRU cache bounds memory usage to a configurable size (e.g., $10^4$ entries), ensuring that repeated queries to the same or nearby displacements are served in $O(1)$.
\end{itemize}

Empirically, for grids up to $101 \times 101$ nodes, the cache hit rate exceeds $95\%$ after an initial warm-up phase, reducing the effective computational cost per query to $O(1)$ amortized.

\subsection{Discussion: Physical Interpretation}

From a physical perspective, the correction term $\Delta \Omega_{\text{correction}}$ represents the deviation of the discrete lattice Green's function from its continuum approximation in the near field. This deviation is most pronounced when the anisotropy stretches the lattice such that the effective distance along one axis is dramatically different from the other, causing the discrete Laplacian to behave non-isotropically even after the coordinate transformation. The elliptical boundary naturally captures this region of strong anisotropy-induced distortion, while the caching mechanism ensures that the correction is applied exactly where needed without compromising global efficiency.

\section{Experimental Validation}
\label{sec:experiments}

We conduct experiments to validate our theoretical framework against numerical simulations.

\subsection{Experimental Setup}

\paragraph{Implementation.} Our analytical framework is implemented in Python 3.11 using the \texttt{mpmath} library for arbitrary-precision theta function evaluation. The hybrid cache is implemented with an LRU cache of size 10,000 entries. The implementation comprises approximately 600 lines of code.

\paragraph{Ground Truth.} We use Xyce, an open-source SPICE simulator, to compute reference effective resistance values. For each configuration, we generate a complete resistor network netlist and solve for nodal voltages under unit current injection.

\begin{table}[H]
\centering
\caption{Global Accuracy Evaluation of the Hybrid Correction Algorithm under Various Lattice Configurations ($50 \times 50$ dimension)}
\label{tab:error_results}
\renewcommand{\arraystretch}{1.2}
\begin{tabular}{llcc}
\toprule
\textbf{Anisotropy ($K$)} & \textbf{Current Source Node} & \textbf{Mean Error (\%)} & \textbf{Max Error (\%)} \\
\midrule
\multicolumn{4}{c}{\textit{Sweep across different Anisotropy Constants (Corner Origin)}} \\
\midrule
$K = 1.00$   & Corner $(0, 0)$   & $0.0056$ & $0.1608$ \\
$K = 2.00$   & Corner $(0, 0)$   & $0.0067$ & $0.1575$ \\
$K = 10.00$  & Corner $(0, 0)$   & $0.0182$ & $0.2349$ \\
$K = 20.00$  & Corner $(0, 0)$   & $0.0277$ & $0.2282$ \\
$K = 50.00$  & Corner $(0, 0)$   & $0.0470$ & $0.2778$ \\
$K = 0.02$   & Corner $(0, 0)$   & $0.0083$ & $0.0127$ \\
\midrule
\multicolumn{4}{c}{\textit{Sweep across varying Current Injections ($K = 10.00$)}} \\
\midrule
$K = 10.00$  & Corner $(0, 0)$   & $0.0182$ & $0.2349$ \\
$K = 10.00$  & Edge $(25, 0)$    & $0.0203$ & $0.1919$ \\
$K = 10.00$  & Center $(25, 25)$ & $0.0297$ & $0.1234$ \\
$K = 10.00$  & Offset $(10, 40)$ & $0.0400$ & $0.1510$ \\
\bottomrule
\end{tabular}
\end{table}

\bibliographystyle{IEEEtran} 

\begin{thebibliography}{99}

\bibitem{vanderpol1929}
B. van der Pol and N. Wiener, ``On oscillation hysteresis in a triode generator with two degrees of freedom,'' \textit{Philos. Mag.}, vol. 7, no. 43, pp. 227--258, 1929.

\bibitem{cserti2000}
J. Cserti, ``Application of the lattice Green's function for calculating the resistance of an infinite network of resistors,'' \textit{Am. J. Phys.}, vol. 68, no. 10, pp. 896--906, 2000.

\bibitem{kose2012}
S. Köse and E. G. Friedman, ``Fast algorithms for effective resistance computation,'' in \textit{Proc. IEEE Int. Symp. Circuits Syst.}, 2012, pp. 2533--2536.

\bibitem{kose2012ir}
S. Köse and E. G. Friedman, ``Efficient algorithms for fast IR drop analysis,'' in \textit{Proc. ACM/IEEE Int. Symp. Low Power Electron. Des.}, 2012, pp. 255--260.

\bibitem{bairamkulov2020}
E. Bairamkulov, K. Xu, M. S. Bakir, and E. G. Friedman, ``Effective resistance of finite two-dimensional grids based on infinity mirror technique,'' \textit{IEEE Trans. Circuits Syst. I, Reg. Papers}, vol. 67, no. 12, pp. 4581--4593, 2020.

\bibitem{atkinson2012}
K. Atkinson and S. Han, \textit{Theoretical Numerical Analysis: A Functional Analysis Framework}. Springer, 2012.

\bibitem{pandey2025anair}
A. Pandey \textit{et al.}, ``AnaIR: Green's function inspired neural network for IR drop prediction,'' \textit{IEEE Trans. Comput.-Aided Des. Integr. Circuits Syst.}, 2025.

\bibitem{klein1993resistance}
D. J. Klein and M. Randić, ``Resistance distance,'' \textit{J. Math. Chem.}, vol. 12, no. 1, pp. 81--95, 1993.

\end{thebibliography}

\end{document}